\documentclass{amsart}
\usepackage[mathscr]{eucal}
\usepackage{amsmath,amsfonts,amssymb}
\usepackage{graphicx}
\newtheorem{theorem}{Theorem}[section]

\newtheorem{proposition}[theorem]{Proposition}

\theoremstyle{definition}
\newtheorem{definition}[theorem]{Definition}
\newtheorem{example}[theorem]{Example}
\newtheorem{remark}[theorem]{Remark}
\makeatletter
\newcommand{\card}{\mathop{\operator@font card}}
\makeatother
\renewcommand{\theenumi}{\roman{enumi}}

\makeatletter
\renewcommand\p@enumii{(\theenumi)}
\def\initmark{{\it Iranian Journal of Fuzzy Systems\/} {\bf Vol. 9, No. 2,} (2012) pp. 113-125}

\def\ps@headings{\ps@empty
  \def\@evenhead{\normalfont\scriptsize
      \rlap{\thepage}\hfil \leftmark{}{}\hfil}%
  \def\@oddhead{\normalfont\scriptsize \hfil
      \rightmark{}{}\hfil \llap{\thepage}}%
  \let\@mkboth\markboth
}

\def\ps@firstpage{\ps@empty
  \def\@oddhead{\normalfont\scriptsize
      \initmark{}{}\hfil \llap{\thepage}}%
  \let\@evenhead\@oddhead 
}

\def\@maketitle{%
  \normalfont\normalsize
  \let\@makefnmark\relax  \let\@thefnmark\relax
  \ifx\@empty\@subjclass\else \@footnotetext{\@setsubjclass}\fi
  \ifx\@empty\@keywords\else \@footnotetext{\@setkeywords}\fi
  \ifx\@empty\thankses\else \@footnotetext{%
    \def\par{\let\par\@par}\@setthanks}\fi
  \@mkboth{\@nx\shortauthors}{\@nx\shorttitle}%
  \global\topskip42\p@\relax 
  \@settitle
  \ifx\@empty\authors \else \@setauthors \fi
  \ifx\@empty\@date\else \@setdate \fi 
  \ifx\@empty\@dedicatory
  \else
    \baselineskip18\p@
    \vtop{\centering{\footnotesize\itshape\@dedicatory\@@par}%
      \global\dimen@i\prevdepth}\prevdepth\dimen@i
  \fi
  \@setabstract
  \normalsize
  \if@titlepage
    \newpage
  \else
    \dimen@16\p@ \advance\dimen@-\baselineskip
    \vskip\dimen@\relax
  \fi
} 

\def\@setdate{%
  \begingroup
  \trivlist
  \centering\footnotesize \@topsep30\p@\relax
  \advance\@topsep by -\baselineskip
  \item\relax
\@date
  \endtrivlist
  \endgroup
}

\def\@setabstracta{%
  \ifvoid\abstractbox
  \else
    \skip@20\p@ \advance\skip@-\lastskip
    \advance\skip@-\baselineskip \vskip\skip@
    \box\abstractbox
    \prevdepth\z@ 
  \fi
}

\textwidth=30truepc
\textheight=45truepc
\makeatother
\def\rightmark{On Generalized Fuzzy Multisets and their Use in Computation}
\def\leftmark{A. Syropoulos}
\begin{document}
\title{On Generalized Fuzzy Multisets and their Use in Computation}
\author{Apostolos Syropoulos}
\address{Greek Molecular Computing Group\\
        366, 28th October St.\\
        GR-671\ 00\ \ Xanthi\ GREECE}
        \email{asyropoulos@yahoo.com}

\thanks{\textbf{Dedicated to the fond memory of my beloved mother Vassiliki Syropoulos.}\newline
\indent{\footnotesize Received: June 2010; Revised: March 2011; Accepted: May 2011}
\newline\indent{\footnotesize {\it Key words and phrases:} $L$-Fuzzy Sets, Fuzzy Multisets,
Computability, P Systems. }}
\maketitle
\begin{abstract}
An orthogonal approach to the fuzzification of both multisets and hybrid
sets is presented. In particular, we introduce $L$-multi-fuzzy and 
$L$-fuzzy hybrid sets, which are general enough and in spirit with the
basic concepts of fuzzy set theory. In addition, we study the properties of
these structures. Also, the usefulness of these structures is examined in
the framework of mechanical multiset processing. More specifically, we 
introduce a variant of fuzzy P~systems and, since simple
fuzzy membrane systems have been introduced elsewhere, we simply extend
previously stated results and ideas.
\end{abstract}
\section{Introduction}
Intuitively, a set is a collection of elements (e.g., numbers or symbols) that
is completely determined by them.\footnote{For the present discussion this 
vague definition is adequate, but it may lead to paradoxes like the ``set of 
all sets'' paradox, which is known in the literature as Russell's paradox. 
However, such paradoxes will not concern us here.}  The elements of a set are 
pairwise different. If we relax this restriction and allow repeated 
occurrences of any element, then we end up with a mathematical structure that 
is known as {\em multiset}\footnote{The term ``multiset'' has been coined by 
N.G.~de~Bruijn~\cite{knuth81}.} (see~\cite{blizard91} for a historical account
of the development of the multiset theory; also, see~\cite{syropoulos01} for a
recent account of the mathematical theory of multisets). Multisets are really 
useful structures and they have found numerous applications in mathematics and 
computer science. For example, the prime factorization of an integer $n>0$ is 
a multiset $\mathcal{N}$ whose elements are primes. Also, every monic 
polynomial $f(x)$ over the complex numbers corresponds in a natural way to the
multiset $\mathcal{F}$ of its roots. In addition, multisets have been used in 
concurrency theory~\cite{deNicola96}. A rather interesting recent development
in the theory of multisets is the discovery that the logic of multisets is the
$\{\otimes,\multimap,\oplus,\mathbf{1}\}$-fragment of intuitionistic linear 
logic (see~\cite{tzouvaras98,tzouvaras03} for details).

If we allow elements of a multiset to occur an integral number of times (and 
that includes a {\em negative} number of times), we end up with a structure that
has been dubbed {\em hybrid set}. These structures have been introduced by 
Loeb~\cite{loeb92}. Initially, one may wonder whether hybrid sets are of any 
use. However, Loeb has shown that they are indeed very useful structures 
(see~\cite{loeb92,loeb95}). For example, one can use a hybrid set to describe 
the roots and the poles of a rational function. In particular, if $f(x)$ is a 
monic rational function, then $f(x)$ can be written  in terms of its roots 
$a_1,a_2,\ldots,a_n$, and its poles $b_1,b_2,\ldots,b_m$, as follows:
\begin{displaymath}
f(x)=c\frac{(x-a_1)(x-a_2)\cdots(x-a_n)}{(x-b_1)(x-b_2)\cdots(x-b_m)}
\end{displaymath}
From this we can directly form a hybrid set, where elements that occur a 
positive number of times correspond to the roots of the function and
elements that occur a negative number of times corresponds to the poles of 
the function.

In a seminal paper, Yager~\cite{yager86} introduced fuzzy multisets, that is
fuzzy subsets where an element may occur more than one time 
(see~\cite{miyamoto04} for an up-to-date presentation of the theory of fuzzy 
multisets, which, however, does not differ significantly 
from~\cite{miyamoto01}). Yager defined fuzzy multisets as follows~\cite{yager86}:
\begin{definition}
Assume $X$ is a set of elements. Then a fuzzy bag\footnote{Multisets are also known
as ``bags,'' ``heaps,'' ``bunches,'' ``samples,'' ``occurrence sets,''
``weighted sets,'' and ``firesets''---finitely repeated element sets.} $A$ 
drawn from $X$ can be characterized by a function 
$\mathrm{Count}.\mathrm{Mem}_{A}$ such that
\begin{displaymath}
\mathrm{Count}.\mathrm{Mem}_{A}:X\rightarrow Q,
\end{displaymath} 
where $Q$ is the set of all crisp bags drawn from the unit interval.
\end{definition}
In ``modern parlance'' fuzzy multisets can be characterized by a high-order 
function. In particular, a fuzzy multiset $A$ can be characterized by a 
function
\begin{displaymath}
A:X\rightarrow\mathbb{N}^{\mathrm{I}},
\end{displaymath}  
where $\mathrm{I}=[0,1]$ and $\mathbb{N}$ is the set of natural numbers
including zero. It is not difficult to see that any fuzzy multiset $A$ is 
actually characterized by a function 
\begin{displaymath}
A:X\times\mathrm{I}\rightarrow\mathbb{N},
\end{displaymath}
which is obtained from the former function by {\em uncurrying} it. However,
it is more natural to demand that for each element $x$ there is only one 
membership degree and one multiplicity. In other words, a ``fuzzy multiset'' $A$ 
should be characterized by a function $X\rightarrow\mathrm{I}\times\mathbb{N}$. 
To distinguish these structures from fuzzy multisets, we will call them {\em multi-fuzzy}
sets~\cite{syropoulos06}. Given a multi-fuzzy set, $A$, the expression $A(x)=(i,n)$ denotes 
that there are $n$ copies of $x$ that belong to $A$ with degree that is equal to $i$. 

Apart from their applicability to mathematics, multisets are really
useful structures as interesting models of computations are built upon them.
For instance, the chemical abstract machine of Berry and Boudol~\cite{cham}
is an abstract machine that is well-suited to model
concurrent computation and manipulates {\em solutions}, which are finite
multisets of {\em molecules} where a molecule is simply a term of an algebra.

Membrane computing is a model of computation that is built around 
the notion of multiset rewriting rules. More specifically, membrane computing 
is a computational paradigm that  was inspired by the way cells live and 
function (see~\cite{paun02} for an overview of the field of membrane 
computing). Roughly speaking, a cell consists of a membrane that separates the
cell from its environment. In addition, this membrane consists 
of compartments surrounded by porous membranes, which, in turn, may contain other 
compartments, and so on. At any moment, matter flows from one compartment to 
any neighboring one. In addition, the cell interacts with its environment in 
various ways (e.g., by dumping matter to its environment). Obviously, at any 
moment a number of processes occur in parallel (e.g., matter moves into a 
compartment, while energy is consumed in another compartment, etc.).

A P~system is a conceptual computational device whose functionality is based on an 
abstraction of the cell. Thus, a P~system consists of porous membranes that are 
populated with multisets of objects, which are usually materialized as strings
of symbols. In addition, there are rules that are used to change the 
configuration of the system. A P~system behaves more or less like a parser, 
which is clearly hard-wired to a particular grammar. Thus, a P~system stops 
when no rule can be applied to the system. The result of the computation is 
always equal to the cardinality of the multiset that is contained in a 
designated compartment. Now, since rigid mathematical models employed in life 
sciences are not completely adequate for the interpretation of biological 
information, there have been various proposals to use fuzzy sets in the 
modeling of biological systems (e.g., see~\cite{akay97,baum98}). Thus, it is 
quite reasonable to attempt the use of the theory of fuzzy sets in P~systems. 
Indeed, such an attempt has been described in~\cite{syropoulos06} that
is generalized to a certain degree in this paper.

\paragraph{Structure of the paper} In what follows, I will define 
$L$-multi-fuzzy sets and $L$-fuzzy hybrid sets. Next, I will define the basic 
operations between such structures (e.g., union, sum, etc.). Also, I  will 
give the definition of certain  standard fuzzy-theoretic operators. By
replacing multisets with either $L$-multi-fuzzy sets or $L$-hybrid sets in the
definition of both P~systems and the chemical abstract machine, we end up with
fuzzy versions of these notational computing devices. We formally define
these devices and briefly investigate their properties. The paper ends 
with the customary concluding remarks.
\section{On $L$-Multi Fuzzy Sets and $L$-Fuzzy Hybrid Sets}
One may say that multisets form an abstraction of the {\em token-type}
distinction, which is the basis of the ``token-token identity 
theory''~\cite{searle04}, while (ordinary) sets are an abstraction of the 
denial of the token-type distinction. To make clear the essence of the 
token-type distinction, I will borrow an example from~\cite{searle04}. If 
one writes the word ``dog'' three times (i.e., ``dog dog dog''), then she 
has written three instances, or tokens, of the one type of word. This observation 
necessitates a distinction between types (i.e, abstract general entities) and tokens
(i.e., concrete particular objects and events). ``A token of a type is
a particular concrete exemplification of that abstract general 
type''.~\cite[page 59]{searle04}. 

There is no question that three instances of the word ``dog'' are tokens 
of the ``dog'' type. However, there are many instances where one cannot make
such a definitive statement regarding the type of some tokens. In particular,
there are many cases where some token $t$ is of type $T$ to a certain degree.
For example, consider the following glyphs:
\begin{center}
\includegraphics[scale=1]{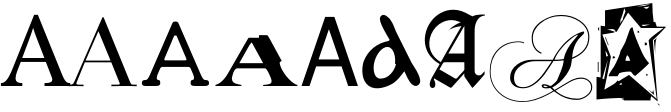}
\end{center}
Each of them depicts an ``A,'' however, each one is reminiscent of the ``standard A'' to 
some degree. For instance, the rightmost is less reminiscent of the ``standard A'', while
the leftmost glyph looks like an ordinary ``A.'' A text is clearly a multiset of letters,
but if we are free to use any of these ``As'' to typeset a document, then we need a
fuzzy multiset to {\em describe} the text. However, this is not a common practice and
so we need a more restricted structure that better models real life 
situations.\footnote{Although our example makes it clear that fuzzy multisets are useful,
still their usage example is not that realistic. Nevertheless, if we consider ``identical'' 
computational processes that may be similar to some {\em prototype} process to different degrees, 
then we have a situation where fuzzy multisets are useful (see~\cite{syropoulos2009} for more details regarding this idea).} As was noted above these structures will be called multi-fuzzy sets. 

Clearly, it is too restrictive to demand that tokens are of some type to a
degree, which is expressed by some number that belongs to the unit interval. 
More generally, we can assume that the likelihood degree is drawn from some
frame $L$. Indeed, it is possible to define many other partially ordered sets
that are frames. For example, Vickers~\cite{vickers90} shows that finite
observations on bit streams have the properties of a frame. Thus, the similarity
degree would express the idea that an element resembles, in some way, an element 
of such a frame. Typically, a frame is defined as follows~\cite{vickers90}:
\begin{definition}
A poset $A$ is a {\em frame} iff
\begin{enumerate}
\item every subset has a join
\item every finite subset has a meet
\item binary meets distribute over joins:
\begin{displaymath}
x\wedge\bigvee Y= \bigvee\Bigl\{ x\wedge y: y\in Y\Bigr\}.
\end{displaymath}
\end{enumerate}
\end{definition}
Note that a frame is clearly a distributive lattice.
So, $L$-multi-fuzzy sets are an extension of
multi-fuzzy sets just like $L$-fuzzy sets~\cite{goguen67} are an extension
of fuzzy sets. 

A.I. Kostrikin in his comments in the entry for the conecpt of duality in the
{\em Encyclopaedia of Mathematics}\footnote{See \texttt{http://eom.springer.de/}.} notes
that ``[d]uality is a very pervasive and important concept in (modern) mathematics.''
One could argue that hybrid sets, and, therefore, fuzzy hybrid sets, extend multisets and
their fuzzy counterparts to describe and/or to model dualities. Let us now proceed with the 
formal definition of $L$-fuzzy hybrid sets:
\begin{definition}\label{fuzzy:hybrid:set}
An $L$-fuzzy hybrid set $\mathscr{A}$ is a mathematical structure that is
characterized by a function $\mathscr{A}:X\rightarrow L\times\mathbb{Z}$, 
where $L$ is a frame, and it is associated with a $L$-fuzzy set 
$A:X\rightarrow L$. More specifically, the equality $\mathscr{A}(x)=(\ell,n)$ 
means that $\mathscr{A}$ contains exactly $n$ copies of $x$, where
$A(x)=\ell$.  
\end{definition}
If we substitute $\mathbb{Z}$ with $\mathbb{N}$ in the previous 
definition, then the resulting structures will be called {\em $L$-multi-fuzzy sets}.   

Assuming that $\mathscr{A}$ is an $L$-fuzzy hybrid set, then one can define 
the following two functions: the \textit{multiplicity} function 
$\mathscr{A}_{m}:X\rightarrow\mathbb{Z}$ and the \textit{membership} 
function $\mathscr{A}_{\mu}:X\rightarrow L$. Clearly, if 
$\mathscr{A}(x)=(\ell,n)$, then  $\mathscr{A}_{m}(x)=n$ 
and $\mathscr{A}_{\mu}(x)=\ell$.
Notice that it is equally easy to define the corresponding functions for an
$L$-multi-fuzzy set.

I believe this is a good point to briefly express my prejudices and my 
intentions regarding the present work. Clearly, it is not my intention to 
develop an axiomatic set theory of $L$-fuzzy hybrid sets and $L$-multi-fuzzy 
sets, in the sense of the Zermelo-Fr{\ae}nkel set theory, but rather a 
``na{\"\i}ve'' set theory in the sense that I will not present a precise 
axiomatization. Therefore, I plan to introduce only the basic set-theoretic 
operations and the basic properties of these sets. To begin with,
let me now define the cardinality of an $L$-fuzzy hybrid set: 
\begin{definition}
Assume that $\mathscr{A}$ is an $L$-fuzzy hybrid set that draws elements 
from a universe $X$. Then its cardinality is defined as follows:
\begin{displaymath}
\card\mathscr{A} = \sum_{x\in X}\mathscr{A}_{\mu}(x)\otimes\mathscr{A}_{m}(x),
\end{displaymath}
where $\otimes:L\times\mathbb{Z}\rightarrow\mathbb{R}$ is a binary multiplication operator 
that is used to compute the product of $\ell\in L$ times $n\in\mathbb{Z}$.
\end{definition}
\begin{example}
If $L=\mathrm{I}\times\mathrm{I}$ (i.e, when extenting ``intuitionistic'' fuzzy sets, 
see~\cite{syropoulos2010}), 
then $(i,j)\otimes n=in-jn$.
\end{example}
\begin{remark}
When $L$ is the unit interval, then $\otimes$ is the usual multiplication operator.
\end{remark}

The cardinality of a set is equal to the number of elements the set contains. 
Clearly, the previous definition is not in spirit with this assumption. 
However, hybrid sets may contain elements that occur a negative number of 
times. Thus, one may think that we should take this fact under consideration 
when computing the cardinality of a hybrid set and, more generally, the 
cardinality of an $L$-fuzzy hybrid set. So, it makes sense to
introduce the notion of a {\em strong} cardinality defined as follows:   
\begin{definition}
Assume that $\mathscr{A}$ is an $L$-fuzzy hybrid set that draws elements from 
a universe $X$. Then its {\emph strong} cardinality is defined as follows:
\begin{displaymath}
\card\mathscr{A} = \sum_{x\in X}\mathscr{A}_{\mu}(x)\otimes
|\mathscr{A}_{m}(x)|,
\end{displaymath}
where $|\mathscr{A}_{m}(x)|$ denotes the absolute value of 
$\mathscr{A}_{m}(x)$. 
\end{definition}

For reasons of completeness I give below the definition of the cardinality of 
$L$-multi-fuzzy sets:
\begin{definition}
Assume that $\mathscr{A}$ is an $L$-multi-fuzzy set that draws elements from a 
universe $X$. Then its cardinality is defined as follows:
\begin{displaymath}
\card\mathscr{A} = \sum_{x\in X}\mathscr{A}_{\mu}(x)\otimes\mathscr{A}_{m}(x),
\end{displaymath}
where $\ell\otimes n$ is some binary operator that maps $\ell\in L$ and 
$n\in\mathbb{N}$ to some positive real number (since $n\ge0$).
\end{definition}

In order to complete the presentation of the basic properties of fuzzy hybrid sets, it is
necessary to define the notion of subsethood. Before, going on with this definition, I
will introduce the (new) partial order $\ll$ over $\mathbb{Z}$. In particular, if
$n,m\in\mathbb{Z}$, then 
\begin{eqnarray*}
n\ll m &\equiv& (n=0) \vee\\
       &      & \Bigl((n>0) \wedge (m>0) \wedge (n\le m)\Bigr) \vee\\
       &      & \Bigl((n<0) \wedge (m>0)\Bigr) \vee\\
       &      & (|n|\le|m|).
\end{eqnarray*}
Note that here $\wedge$ and $\vee$ denote the classical 
logical conjunction and disjunction operators, respectively. In addition, the symbols $\le$ and $<$ 
are the well-known ordering operators, and $|n|$ is the absolute value of $n$. 
\begin{example}
From the previous definition it should be obvious that $0\ll n$, for all $n\in\mathbb{Z}$. Also,
$3\ll 4$, $-3\ll 4$, and $-4\ll -3$.
\end{example}

But what kind of structure is
the pair $(\mathbb{Z},\ll)$? The answer is easy with the help of the following
result: 
\begin{proposition}
The relation $\ll$ is a partial order.
\end{proposition}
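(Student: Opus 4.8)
The plan is to verify the three defining properties of a partial order --- reflexivity, antisymmetry, and transitivity --- for $\ll$ on $\mathbb{Z}$. Before doing so, I would make the case structure of the definition explicit by splitting on the signs of the two arguments, since the four disjuncts are meant to cover essentially disjoint regimes: $n=0$; both $n,m>0$; $n<0<m$; and the remaining case of two negatives, which I read (consistently with the preceding Example, where $-4\ll -3$) as ordering negatives by $|n|\ge|m|$, equivalently $n\le m$. Reflexivity is then immediate: $0\ll 0$ by the first clause, $n\ll n$ for $n>0$ by $n\le n$, and $n\ll n$ for $n<0$ by $|n|\ge|n|$.

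For antisymmetry I would argue that $\ll$ never relates elements of two different sign-regimes in both directions. Concretely, if $n\ll m$ and $m\ll n$ with $n\neq m$, I would rule out the mixed cases first: $0$ is $\ll$-below every integer but nothing nonzero is $\ll 0$, and a negative is $\ll$-below every positive but no positive is $\ll$ any negative. Hence $n$ and $m$ must lie in the same regime, where $\ll$ restricts to the usual order $\le$ on the positives and to $\le$ (equivalently, decreasing absolute value) on the negatives; antisymmetry of $\le$ then forces $n=m$. I expect this to be the main obstacle, because it is exactly the step that collapses if the fourth clause is read without its sign restriction: in that case both $3\ll -3$ and $-3\ll 3$ would hold, so pinning down the correct reading of that clause is the real crux of the whole argument.

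Transitivity I would likewise dispatch by cases on the regimes of the three integers, using that within each regime $\ll$ is the transitive relation $\le$ and that the cross-regime steps only ever point upward along the single chain $0\ll\cdots\ll -3\ll -2\ll -1\ll 1\ll 2\ll\cdots$. In fact, the cleanest packaging of everything is to observe that this chain exhibits $(\mathbb{Z},\ll)$ as a \emph{linear} order, which I would make rigorous by exhibiting an order-embedding $\phi:\mathbb{Z}\to\mathbb{R}$ with $n\ll m \iff \phi(n)\le\phi(m)$ --- for instance $\phi(0)=0$, $\phi(n)=-1/n$ for $n<0$, and $\phi(n)=1+n$ for $n>0$, which sends the negatives into $(0,1]$ and the positives above $2$. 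Since $\phi$ is injective and $(\mathbb{R},\le)$ is a partial order, all three properties are inherited at once, and verifying that $\phi$ both preserves and reflects $\ll$ reduces to the same finite sign analysis as above. The only genuine content, therefore, is confirming that the four disjuncts capture exactly the comparabilities of this chain.
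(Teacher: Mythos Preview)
Your proposal is correct. The paper's own proof is a bare case split on the sign of the arguments, verifying reflexivity, antisymmetry and transitivity separately and tersely --- essentially what you sketch in your first two paragraphs. Your further observation that $(\mathbb{Z},\ll)$ is in fact a \emph{linear} order, witnessed by the explicit order-embedding $\phi$ into $(\mathbb{R},\le)$, is a genuinely different and cleaner packaging: once $\phi$ is shown to be injective and to preserve and reflect the order, all three axioms follow at once, and you get totality of $\ll$ as a bonus, which the paper does not note. Your explicit insistence that the fourth disjunct must be read as governing only the two-negatives regime (and oriented so that $-4\ll -3$, matching the Example) is also sharper than the paper's treatment: the paper silently assumes a sign restriction on that clause, and in its antisymmetry step actually writes $|a|\le|b|$ for the negative case, which would give $-3\ll -4$ rather than the Example's $-4\ll -3$. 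So your caution about that clause is well placed, and the embedding argument makes the whole proof robust against exactly this definitional wobble.
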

\begin{proof}
I have to prove that the relation $\ll$ is reflexive, antisymmetric and transitive:
\begin{description}
\item[Reflexivity] Assume that $a\in\mathbb{Z}$. Then if $a=0$, $a\ll a$ from the
first part of the disjunction. If $a<0$, then $a\ll a$ from the fourth part of the
disjunction and if $a>0$, then $a\ll a$ from the second part of the disjunction.
\item[Antisymmetry] Assume that $a,b\in\mathbb{Z}$, $a\ll b$, and $b\ll a$. Then
if $a=0$ this implies that $b=0$ and so $a=b$. If $a<0$, then it follows that $b<0$,
$|a|\le|b|$, and $|b|\le|a|$, which implies that $a=b$. Similarly, if $a>0$, then it
follows that $b>0$, $a\le b$, and $b\le a$, which implies that $a=b$.
\item[Transitivity] Assume that $a,b,c\in\mathbb{Z}$, $a\ll b$, and $b\ll c$.
Then if $a=0$, then clearly $a\ll c$. If $a<0$ and $b<0$, then either $c<0$ or $c>0$, but
since $|b|\le|c|$, this implies that $a\ll c$. If $a>0$ and $b>0$, then $c>0$ and since 
$b\le c$ this implies that $a\ll c$. If $a<0$ and $b>0$, then since $b\ll c$, this implies
that $c>0$, which means that $a\ll c$.
\end{description}
\end{proof} 
Note that $a\gg b$ is an alternative form of $b\ll a$, which will be used in the rest
of this paper. Let us now proceed with the definition of the notion of subsethood for 
$L$-fuzzy hybrid sets:
\begin{definition}
Assume that $\mathscr{A},\mathscr{B}:X\rightarrow L\times\mathbb{Z}$ are two $L$-fuzzy 
hybrid sets. Then $\mathscr{A}\subseteq\mathscr{B}$ if and only if 
$\mathscr{A}_{\mu}(x)\sqsubseteq\mathscr{B}_{\mu}(x)$ and
$\mathscr{A}_{m}(x)\ll\mathscr{B}_{m}(x)$ for all $x\in X$.
\end{definition} 
Remark that for all $\ell_1,\ell_2\in L$, $\ell_1\sqsubseteq\ell_2$ if $\ell_1$ is ``less 
than or equal'' to $\ell_2$ in the sense of the partial order defined over $L$. The
definition of subsethood for $L$-multi-fuzzy sets is more straightforward:
\begin{definition}
Assume that $\mathscr{A},\mathscr{B}:X\rightarrow L\times\mathbb{N}$ are two 
$L$-multi-fuzzy sets. Then $\mathscr{A}\subseteq\mathscr{B}$ if and only if 
$\mathscr{A}_{\mu}(x)\sqsubseteq\mathscr{B}_{\mu}(x)$ and
$\mathscr{A}_{m}(x)\le\mathscr{B}_{m}(x)$ for all $x\in X$.
\end{definition} 

\section{Basic Set Operations}
The basic operations between sets are their union and their intersection. A 
third operation, viz. set sum, is meaningful only for multisets. Also, since 
both $L$-multi-fuzzy sets and $L$-fuzzy hybrid sets are actually 
generalizations of fuzzy sets, one should be able to define the $\alpha$-cuts 
of such sets. I will start by defining the basic set operations between 
$L$-multi-fuzzy sets.

\subsection{Set Operations of $L$-Multi-Fuzzy Sets}
Let me first present the definitions of union and intersection of 
$L$-multi-fuzzy sets:
\begin{definition} Assuming that $\mathscr{A},\mathscr{B}:X\rightarrow L\times
\mathbb{N}$ are two $L$-multi-fuzzy  sets, then their union, denoted 
$\mathscr{A}\cup\mathscr{B}$, is defined as follows:
\begin{displaymath}
\bigl(\mathscr{A}\cup\mathscr{B}\bigr)(x)=\biggl(\mathscr{A}_{\mu}(x)
                                                 \sqcup\mathscr{B}_{\mu}(x),
                                                 \max\Bigl\{\mathscr{A}_{m}(x),
                                                             \mathscr{B}_{m}(x)\Bigr\}
                                           \biggr),
\end{displaymath}
where $a\sqcup b$ is the join of $a,b\in L$.
\end{definition}
\begin{definition} Assuming that $\mathscr{A},\mathscr{B}:X\rightarrow L\times
\mathbb{N}$ are two $L$-multi-fuzzy sets, then their intersection, denoted 
$\mathscr{A}\cap\mathscr{B}$, is defined as follows:
\begin{displaymath}
\bigl(\mathscr{A}\cap\mathscr{B}\bigr)(x)=\biggl(\mathscr{A}_{\mu}(x)
                                                 \sqcap\mathscr{B}_{\mu}(x),
                                                 \min\Bigl\{\mathscr{A}_{m}(x),
                                                             \mathscr{B}_{m}(x)\Bigr\}
                                           \biggr),
\end{displaymath}
where $a\sqcap b$ is the meet of $a,b\in L$.
\end{definition}
I will now define the sum of two $L$-multi-fuzzy sets:
\begin{definition}
Suppose that $\mathscr{A},\mathscr{B}:X\rightarrow L\times\mathbb{N}$ 
are two $L$-multi-fuzzy sets. Then their sum, denoted 
$\mathscr{A}\uplus\mathscr{B}$, is defined as follows:
\begin{displaymath}
\bigl(\mathscr{A}\uplus\mathscr{B}\bigr)(x)=\biggl(\mathscr{A}_{\mu}(x)
                                                 \sqcup\mathscr{B}_{\mu}(x),
                                                 \mathscr{A}_{m}(x) + \mathscr{B}_{m}(x)
                                           \biggr).
\end{displaymath}
\end{definition}
Although it is crystal clear, it is necessary to say that $\sqcup$ and $\sqcap$ are 
operators that are part of the definition of the frame $L$. And as such they have a number
of properties (e.g., they are idempotent, etc., see~\cite[p.~15]{vickers90} for details) 
that, naturally, affect the properties of the operations defined so far. Indeed, these 
operations have the following properties:
\begin{theorem}
For any three $L$-multi-fuzzy sets $\mathscr{A},\mathscr{B},\mathscr{C}:X\rightarrow
L\times\mathbb{N}$ the following equalities hold:
\begin{enumerate}
\item Commutativity:
\begin{eqnarray*}
\mathscr{A}\cup\mathscr{B}&=&\mathscr{B}\cup\mathscr{A}\\
\mathscr{A}\cap\mathscr{B}&=&\mathscr{B}\cap\mathscr{A}\\
\mathscr{A}\uplus\mathscr{B}&=&\mathscr{B}\uplus\mathscr{B};
\end{eqnarray*}

\item Associativity:
\begin{eqnarray*}
\mathscr{A}\cup(\mathscr{B}\cup\mathscr{C})&=&
(\mathscr{A}\cup\mathscr{B})\cup\mathscr{C}\\
\mathscr{A}\cap(\mathscr{B}\cap\mathscr{C})&=&
(\mathscr{A}\cap\mathscr{B})\cap\mathscr{C}\\
\mathscr{A}\uplus(\mathscr{B}\uplus\mathscr{C})&=&
(\mathscr{A}\uplus\mathscr{B})\uplus\mathscr{C};
\end{eqnarray*}

\item Idempotency:
\begin{eqnarray*}
\mathscr{A}\cup\mathscr{A}&=&\mathscr{A}\\
\mathscr{A}\cap\mathscr{A}&=&\mathscr{A};
\end{eqnarray*}

\item Distributivity:
\begin{eqnarray*}
\mathscr{A}\cap(\mathscr{B}\cup\mathscr{C})&=&
(\mathscr{A}\cap\mathscr{B})\cup(\mathscr{A}\cap\mathcal{C})\\
\mathscr{A}\cup(\mathscr{B}\cap\mathscr{C})&=&
(\mathscr{A}\cup\mathscr{B})\cap(\mathscr{A}\cup\mathscr{C});
\end{eqnarray*}

\item Distributivity of sum:
\begin{eqnarray*}
\mathscr{A}\uplus(\mathscr{B}\cup\mathscr{C})&=&
(\mathscr{A}\uplus\mathscr{B})\cup(\mathscr{A}\uplus\mathscr{C})\\
\mathscr{A}\uplus(\mathscr{B}\cap\mathscr{C})&=&
(\mathscr{A}\uplus\mathscr{B})\cap(\mathscr{A}\uplus\mathscr{C});
\end{eqnarray*}
\end{enumerate}
\end{theorem}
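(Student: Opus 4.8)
The plan is to reduce every identity to the two component functions, since an $L$-multi-fuzzy set is completely determined by its membership function $\mathscr{A}_{\mu}:X\to L$ and its multiplicity function $\mathscr{A}_{m}:X\to\mathbb{N}$, and two such sets are equal exactly when both components agree pointwise at every $x\in X$. Fixing an arbitrary $x\in X$, each equality in the theorem splits into a claim about the $L$-component (built from $\sqcup$ and $\sqcap$) and a claim about the $\mathbb{N}$-component (built from $\max$, $\min$, and $+$). So the whole theorem collapses to verifying the corresponding algebraic laws in the frame $L$ on one side and in the ordered commutative monoid $(\mathbb{N},\max,\min,+)$ on the other.

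First I would dispatch the $L$-component. By the definition of a frame, $(L,\sqcup,\sqcap)$ is a distributive lattice, so $\sqcup$ and $\sqcap$ are commutative, associative, and idempotent, and the two absorption/distributive laws hold; these are exactly the identities cited from Vickers. Commutativity, associativity, idempotency, and both distributive laws for $\cup$ and $\cap$ therefore follow immediately on the membership component. For the sum $\uplus$, whose first component is again $\sqcup$, the relevant membership-side identities in parts (i), (ii), and (v) are once more just the lattice laws for $\sqcup$ together with distributivity of $\sqcup$ over $\sqcup$ and over $\sqcap$.

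Next I would handle the $\mathbb{N}$-component. Here I use that $(\mathbb{N},\le)$ is a totally ordered set, so $\max$ and $\min$ are commutative, associative, idempotent, and mutually distributive, and that $+$ is a commutative associative operation that distributes over both $\max$ and $\min$ because it is monotone: for any $a,b,c\in\mathbb{N}$ one has $a+\max\{b,c\}=\max\{a+b,a+c\}$ and $a+\min\{b,c\}=\min\{a+b,a+c\}$. These give the multiplicity halves of all five groups, including distributivity of sum in part (v). Combining the two components coordinatewise reassembles each asserted equality of $L$-multi-fuzzy sets.

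The only genuine subtlety I anticipate is bookkeeping rather than mathematics. One must check that the paper's conventions on associativity are consistent: associativity of $\uplus$ in part (ii) rests on associativity of both $\sqcup$ (frame axiom) and $+$ (monoid law), and the mixed laws of part (v) require pairing distributivity of $\sqcup$ over $\sqcup$/$\sqcap$ with distributivity of $+$ over $\max$/$\min$, so these must be matched correctly coordinate by coordinate. I would also note the evident typos in the statement---the commutativity line $\mathscr{A}\uplus\mathscr{B}=\mathscr{B}\uplus\mathscr{B}$ should read $\mathscr{B}\uplus\mathscr{A}$, and $\mathcal{C}$ should be $\mathscr{C}$---and silently prove the intended identities. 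No step presents a real obstacle; the proof is a routine coordinatewise verification once the reduction to components is in place.
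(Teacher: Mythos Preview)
Your proposal is correct and follows essentially the same approach as the paper: reduce each identity to its two coordinates and invoke the lattice laws in $L$ together with the $\max$/$\min$/$+$ laws in $\mathbb{N}$. If anything, you are slightly more careful than the paper, which in parts (iv) and (v) explicitly cites only one of the two components (the frame identities for (iv), the $\mathbb{N}$ identities for (v)) and leaves the other implicit; your note that both coordinates must be matched for each law is a useful clarification, and your observation about the typos is accurate.
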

\begin{proof}
\begin{enumerate}
\item Although this is easy, I will prove all cases:
\begin{eqnarray*}
\bigl(\mathscr{A}\cup\mathscr{B}\bigr)(z)&=&\biggl(
  \mathscr{A}_{\mu}(z)\sqcup\mathscr{B}_{\mu}(z),
  \max\Bigl\{\mathscr{A}_{m}(z),
             \mathscr{B}_{m}(z)\Bigr\}\biggr)\\
&=&
                       \biggl(
  \mathscr{B}_{\mu}(z)\sqcup
             \mathscr{A}_{\mu}(z),
  \max\Bigl\{\mathscr{B}_{m}(z),
             \mathscr{A}_{m}(z)\Bigr\}\biggr)\\
&=& \bigl(\mathscr{B}\cup\mathscr{A}\bigr)(z)\\[5pt]
\bigl(\mathscr{A}\cap\mathscr{B}\bigr)(z)&=&\biggl(
  \mathscr{A}_{\mu}(z)\sqcap\mathscr{B}_{\mu}(z),
  \min\Bigl\{\mathscr{A}_{m}(z),
             \mathscr{B}_{m}(z)\Bigr\}\biggr)\\
&=&
                       \biggl(
  \mathscr{B}_{\mu}(z)\sqcap\mathscr{A}_{\mu}(z),
  \min\Bigl\{\mathscr{B}_{m}(z),
             \mathscr{A}_{m}(z)\Bigr\}\biggr)\\
   &=& \bigl(\mathscr{B}\cap\mathscr{A}\bigr)(z)\\[5pt]
\bigl(\mathscr{A}\uplus\mathscr{B}\bigr)(z)&=&\biggl(
  \mathscr{A}_{\mu}(z)\sqcup\mathscr{B}_{\mu}(z),
  \mathscr{A}_{m}(z)+\mathscr{B}_{m}(z)\biggr)\\
&=&
                       \biggl(
   \mathscr{B}_{\mu}(z)\sqcup\mathscr{A}_{\mu}(z),
   \mathscr{B}_{m}(z)+ \mathscr{A}_{m}(z)\biggr)\\
   &=& \bigl(\mathscr{B}\uplus\mathscr{A}\bigr)(z)
\end{eqnarray*}

\item I will prove only the first case as the others can be proved similarly:
\begin{eqnarray*}
(\mathscr{A}\cup(\mathscr{B}\cup\mathscr{C}))(z) &=&
\biggl(\mathscr{A}_{\mu}(z)\sqcup\Bigl(\mathscr{B}_{\mu}(z)\sqcup\mathscr{C}_{\mu}(z)
                                \Bigr),
       \max\Bigl\{\mathscr{A}_{m}(z),\max\Bigl\{\mathscr{B}_{m}(z),\mathscr{C}_{m}\Bigr\}
           \Bigr\}\biggr)\\
&=& \biggl(\Bigl(\mathscr{A}_{\mu}(z)\sqcup\mathscr{B}_{\mu}(z)\Bigr)\sqcup
           \mathscr{C}_{\mu}(z),
       \max\Bigl\{\max\Bigl\{\mathscr{A}_{m}(z),\mathscr{B}_{m}(z)\Bigr\},
           \mathscr{C}_{m}\Bigr\}
           \biggr)\\
&=& ((\mathscr{A}\cup\mathscr{B})\cup\mathscr{C})(z)
\end{eqnarray*}
\item As in the previous case, I will prove only the first case as the other can be proved
 similarly:
\begin{eqnarray*}
(\mathscr{A}\cup\mathscr{A})(z) &=& \Bigl(\mathscr{A}_{\mu}(z)\sqcup\mathscr{A}_{\mu}(z),
                                    \max\{\mathscr{A}_{m}(z),\mathscr{A}_{m}(z)\}\Bigr)\\
&=& \Bigl(\mathscr{A}_{\mu}(z),\mathscr{A}_{m}(z)\Bigr)\\
&=& \mathscr{A}(z)      
\end{eqnarray*}
\item The proof of this case follows from the fact that the following equalities are
true for the any three elements of a frame:
\begin{eqnarray*}
x\sqcap(y\sqcup z) &=& (x\sqcap y)\sqcup(x\sqcap y)\\
x\sqcup(y\sqcap z) &=& (x\sqcup y)\sqcap(x\sqcup y)
\end{eqnarray*}
\item As with the previous case the proof for this case follows from the fact 
that for any $x,y,z\in\mathbb{N}$ the following equalities hold:
\begin{eqnarray*}
x+\max\{y,z\} &=& \max\{x+y, x+z\}\\
x+\min\{y,z\} &=& \min\{x+y, x+z\}
\end{eqnarray*}
\end{enumerate}
\end{proof}

The $\alpha$-cut of a fuzzy subset is just a crisp set. Similarly, the 
$\alpha$-cut of
an $L$-multi-fuzzy set has to be a multiset. Indeed, if $[x]_{n}$ denotes a
multiset that consists of only $n$ copies of $x$, the following definition 
is in spirit with the general theory of fuzzy sets:
\begin{definition}
Suppose that $\mathscr{A}$ is an 
$L$-multi-fuzzy set with universe the set $X$, and
that $\alpha\in L$. Then the $\alpha$-cut of $\mathscr{A}$, denoted by
${}^{\alpha}\kern-0.25em\mathscr{A}$, is the multiset
\begin{displaymath}
{}^{\alpha}\kern-0.25em\mathscr{A}=
\bigcup_{\substack{x\in X \\ \alpha\sqsubseteq\mathscr{A}_{\mu}(x)}} [x]_{\mathscr{A}_{m}(x)}.
\end{displaymath}
\end{definition}
Not so surprisingly, the properties of the $\alpha$-cut of $L$-multi-fuzzy sets
are similar to those of plain fuzzy sets. These properties are summarized 
below:
\begin{theorem}\label{alpha-proof}
Assume that $\mathscr{A}$ and $\mathscr{B}$ are two $L$-multi-fuzzy sets with universe
the set $X$. Then the following properties hold:
\begin{enumerate}
\item if $\alpha\sqsubseteq\beta$, then ${}^{\alpha}\kern-0.25em\mathscr{A}\supseteq 
{}^{\beta}\kern-0.25em\mathscr{A}$ and
\item $^{\alpha}\kern-0.15em(\mathscr{A}\cap
\mathscr{B})={}^{\alpha}\kern-0.25em
\mathscr{A}\cap{}^{\alpha}\kern-0.05em\mathscr{B}$,
$^{\alpha}\kern-0.15em(\mathscr{A}\cup\mathscr{B})={}^{\alpha}\kern-0.25em
\mathscr{A}\cup{}^{\alpha}\kern-0.05em\mathscr{B}$, and
$^{\alpha}\kern-0.15em(\mathscr{A}\uplus\mathscr{B})=
{}^{\alpha}\kern-0.25em\mathscr{A}\uplus{}^{\alpha}\kern-0.05em\mathscr{B}$.
\end{enumerate}
\end{theorem}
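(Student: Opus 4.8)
The plan is to translate every assertion into a pointwise comparison of multiplicities, where it becomes an elementary fact about natural numbers. Spelling out the $\alpha$-cut, for any $L$-multi-fuzzy set $\mathscr{C}$ the multiset ${}^{\alpha}\mathscr{C}$ has multiplicity function
\[
({}^{\alpha}\mathscr{C})(x)=
\begin{cases}
\mathscr{C}_{m}(x) & \text{if } \alpha\sqsubseteq\mathscr{C}_{\mu}(x),\\
0 & \text{otherwise,}
\end{cases}
\]
and the multiset relations $\supseteq$, $\cap$, $\cup$, $\uplus$ act, pointwise, as $\ge$, $\min$, $\max$, $+$ on these multiplicities. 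Thus each claimed identity in (ii) and the inclusion in (i) reduces to a statement to be verified for a fixed but arbitrary $x$, by a case split on which of $\alpha\sqsubseteq\mathscr{A}_{\mu}(x)$ and $\alpha\sqsubseteq\mathscr{B}_{\mu}(x)$ hold. Throughout I will use that multiplicities of an $L$-multi-fuzzy set lie in $\mathbb{N}$, hence are nonnegative.

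For part (i), fix $x$. If $\beta\sqsubseteq\mathscr{A}_{\mu}(x)$, then transitivity of $\sqsubseteq$ together with $\alpha\sqsubseteq\beta$ gives $\alpha\sqsubseteq\mathscr{A}_{\mu}(x)$, so both cuts read off $\mathscr{A}_{m}(x)$ and the two multiplicities coincide; if $\beta\not\sqsubseteq\mathscr{A}_{\mu}(x)$, then $({}^{\beta}\mathscr{A})(x)=0\le({}^{\alpha}\mathscr{A})(x)$ by nonnegativity. Hence $({}^{\alpha}\mathscr{A})(x)\ge({}^{\beta}\mathscr{A})(x)$ for every $x$, i.e. ${}^{\alpha}\mathscr{A}\supseteq{}^{\beta}\mathscr{A}$.

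For the intersection identity in (ii), the governing lattice fact is that, the meet being the greatest lower bound, $\alpha\sqsubseteq\mathscr{A}_{\mu}(x)\sqcap\mathscr{B}_{\mu}(x)$ holds if and only if $\alpha\sqsubseteq\mathscr{A}_{\mu}(x)$ and $\alpha\sqsubseteq\mathscr{B}_{\mu}(x)$. When both conditions hold, the left side has multiplicity $\min\{\mathscr{A}_{m}(x),\mathscr{B}_{m}(x)\}$ and the right side has $\min\{({}^{\alpha}\mathscr{A})(x),({}^{\alpha}\mathscr{B})(x)\}=\min\{\mathscr{A}_{m}(x),\mathscr{B}_{m}(x)\}$ too; when at least one condition fails, the left side is $0$ while one of the two cuts on the right is $0$, forcing the right-hand $\min$ to be $0$ by nonnegativity. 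This closes the first equation of (ii).

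The union and sum identities are meant to follow from the same template, and here lies what I expect to be the main obstacle. For $\cup$ and $\uplus$ the membership condition on the left is $\alpha\sqsubseteq\mathscr{A}_{\mu}(x)\sqcup\mathscr{B}_{\mu}(x)$, whereas the right side reacts to $\alpha\sqsubseteq\mathscr{A}_{\mu}(x)$ and $\alpha\sqsubseteq\mathscr{B}_{\mu}(x)$ separately, so both equalities reduce to the biconditional
\[
\alpha\sqsubseteq\mathscr{A}_{\mu}(x)\sqcup\mathscr{B}_{\mu}(x)\iff\bigl(\alpha\sqsubseteq\mathscr{A}_{\mu}(x)\bigr)\vee\bigl(\alpha\sqsubseteq\mathscr{B}_{\mu}(x)\bigr).
\]
The direction $\Leftarrow$ is immediate, since each of $\mathscr{A}_{\mu}(x),\mathscr{B}_{\mu}(x)$ sits below their join, and by itself it yields the inclusions ${}^{\alpha}\mathscr{A}\cup{}^{\alpha}\mathscr{B}\subseteq{}^{\alpha}(\mathscr{A}\cup\mathscr{B})$ and its analogue for $\uplus$. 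The reverse direction $\Rightarrow$, which supplies the opposite inclusions and hence the full equalities, is the crux: as $\mathscr{A}_{\mu}(x)$ and $\mathscr{B}_{\mu}(x)$ range over arbitrary pairs in $L$, it is exactly the assertion that $\alpha$ is join-prime, i.e. $\alpha\sqsubseteq a\sqcup b$ implies $\alpha\sqsubseteq a$ or $\alpha\sqsubseteq b$. This is the genuinely delicate point, because in a general frame an element can lie beneath a join without lying beneath either factor; so to complete the union and sum equalities in full I would first have to isolate the precise hypothesis on $\alpha$ (join-primeness of $\alpha$, or a corresponding restriction on $L$) that validates the $\Rightarrow$ direction, after which the pointwise case split closes the argument just as in the intersection case.
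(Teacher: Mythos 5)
Your pointwise reduction is exactly the strategy of the paper's own proof, and where your argument is complete it is correct and in fact tighter than the paper's. For part (i), your two-case split ($\beta\sqsubseteq\mathscr{A}_{\mu}(x)$ or not) is exhaustive, whereas the paper's three cases ($\mathscr{A}_{\mu}(x)\not\sqsubseteq\beta$; $\alpha\sqsubseteq\mathscr{A}_{\mu}(x)\sqsubseteq\beta$; $\alpha\not\sqsubseteq\mathscr{A}_{\mu}(x)$) need not exhaust a poset with incomparable elements. For the intersection identity you verify both inclusions from the key fact that $\alpha\sqsubseteq a\sqcap b$ iff $\alpha\sqsubseteq a$ and $\alpha\sqsubseteq b$; the paper uses the same fact but argues only one direction (from a point of ${}^{\alpha}(\mathscr{A}\cap\mathscr{B})$ with multiplicity $n$ to $\min\{{}^{\alpha}\mathscr{A}(x),{}^{\alpha}\mathscr{B}(x)\}=n$, implicitly assuming $n>0$).

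The obstacle you flag for $\cup$ and $\uplus$ is not a failure of your proof but a genuine defect of the theorem as stated, and the paper's proof silently sidesteps it: the displayed argument for part (ii) treats only the intersection case and says nothing at all about union or sum. Your diagnosis is exactly right, and the missing direction is in fact \emph{false} in a general frame. Concretely, let $L=\{\bot,a,b,\top\}$ be the four-element Boolean algebra with $a,b$ incomparable (a frame, being a finite distributive lattice), take $\alpha=\top$, and on $X=\{x\}$ set $\mathscr{A}(x)=(a,1)$ and $\mathscr{B}(x)=(b,1)$. Then $(\mathscr{A}\cup\mathscr{B})_{\mu}(x)=a\sqcup b=\top\sqsupseteq\alpha$, so ${}^{\alpha}(\mathscr{A}\cup\mathscr{B})(x)=1$ and ${}^{\alpha}(\mathscr{A}\uplus\mathscr{B})(x)=2$, while $\alpha\not\sqsubseteq a$ and $\alpha\not\sqsubseteq b$ give $({}^{\alpha}\mathscr{A}\cup{}^{\alpha}\mathscr{B})(x)=({}^{\alpha}\mathscr{A}\uplus{}^{\alpha}\mathscr{B})(x)=0$. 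So the union and sum equalities require precisely the hypothesis you isolated---$\alpha$ join-prime, which holds automatically when $L$ is a chain such as $\mathrm{I}=[0,1]$ but not in a general frame---and under that hypothesis your case-split template does close all three identities. In short: you have correctly proved everything the paper actually proves, and correctly identified that the remaining claims are unprovable in the stated generality; only the one-sided inclusions ${}^{\alpha}\mathscr{A}\cup{}^{\alpha}\mathscr{B}\subseteq{}^{\alpha}(\mathscr{A}\cup\mathscr{B})$ and ${}^{\alpha}\mathscr{A}\uplus{}^{\alpha}\mathscr{B}\subseteq{}^{\alpha}(\mathscr{A}\uplus\mathscr{B})$ survive without it, as you note.
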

\begin{proof}
\begin{enumerate}
\item Let $x\in X$ and $\alpha\sqsubseteq\beta$. If 
$\mathscr{A}_{\mu}(x)\not\sqsubseteq\beta$, then 
${}^{\alpha}\kern-0.25em\mathscr{A}(x)={}^{\beta}\kern-0.25em\mathscr{A}(x)$. 
If $\alpha\sqsubseteq\mathscr{A}_{\mu}(x)\sqsubseteq\beta$, then
${}^{\alpha}\kern-0.25em\mathscr{A}(x)\ge{}^{\beta}\kern-0.25em\mathscr{A}(x)$.
If $\alpha\not\sqsubseteq\mathscr{A}_{\mu}(x)$, then 
${}^{\alpha}\kern-0.25em\mathscr{A}(x)={}^{\beta}\kern-0.25em\mathscr{A}(x)=
0$. Thus, for all possible cases  ${}^{\alpha}\kern-0.25em\mathscr{A}(x)\ge
{}^{\beta}\kern-0.05em\mathscr{A}(x)$, which means that
${}^{\alpha}\kern-0.25em\mathscr{A}\supseteq{}^{\beta}\kern-0.25em\mathscr{A}$.
\item Assume that ${}^{\alpha}(\mathscr{A}\cap\mathscr{B})(x)=n$. Then
this means that
\begin{displaymath}
\min\Bigl\{\mathscr{A}_{m}(x),\mathscr{B}_{m}(x)\Bigr\}=n.
\end{displaymath}
Also, it implies that 
$(\mathscr{A}\cap\mathscr{B})_{\mu}(x)\sqsupseteq\alpha$ and hence
$\mathscr{A}_{\mu}(x)\sqcap\mathscr{B}_{\mu}(x)\sqsupseteq\alpha$. From this,
one can immediately deduce that $\mathscr{A}_{\mu}(x)\sqsupseteq\alpha$ and 
$\mathscr{B}_{\mu}(x)\sqsupseteq\alpha$. Suppose now that 
$\mathscr{A}_{m}(x)=n_{1}$ and $\mathscr{B}_{m}(x)=n_{2}$. Then this means
that ${}^{\alpha}\kern-0.25em\mathscr{A}(x)=n_{1}$ and
${}^{\alpha}\kern-0.05em\mathscr{B}(x)=n_{2}$ and so
\begin{displaymath}
\min\Bigl\{{}^{\alpha}\kern-0.25em\mathscr{A}(x),
{}^{\alpha}\kern-0.05em\mathscr{B}(x)\Bigr\}=n.
\end{displaymath}
\end{enumerate}
\end{proof}

\subsection{Set Operations of $L$-Fuzzy Hybrid Sets}
Loeb has shown that the set of all subsets of a given hybrid set with the subsethood 
relation do not form a lattice. This means that if
$f$ and $g$ are two hybrid sets, then if they have lower bounds, they do not necessarily
have a greatest lower bound. Similarly, if $f$ and $g$ have upper bounds, then they do not
necessarily have a lowest upper bound. Practically, this means that given two hybrid sets
$f$ and $g$, one cannot define their union and their intersection. Fortunately, the sum
of hybrid sets is a well-defined operation. Thus, we can easily extend this definition
as follows:
\begin{definition}
Assume that $\mathscr{A},\mathscr{B}:X\rightarrow L\times\mathbb{Z}$ 
are two $L$-fuzzy hybrid sets. Then their sum, denoted 
$\mathscr{A}\uplus\mathscr{B}$, is defined as follows:
\begin{displaymath}
\bigl(\mathscr{A}\uplus\mathscr{B}\bigr)(x)=\biggl(\mathscr{A}_{\mu}(x)
                                                 \sqcup\mathscr{B}_{\mu}(x),
                                                 \mathscr{A}_{m}(x) + \mathscr{B}_{m}(x)
                                           \biggr).
\end{displaymath}
\end{definition}

Let $\{f_{i}\}$ denote a finite collection of hybrid sets with a common universe
$X$, where each of these sets contains repeated occurrence of only one element 
$x_{i}\in X$. In addition, let us insist that no two $f_{i}$ and $f_{j}$ will have common
elements. Also, let us denote with $\uplus_{i}f_{i}$ the unique hybrid set that is the sum
of all $f_{i}$. With these preliminary definitions, the road for the following definition 
has been paved:
\begin{definition}
Suppose that $\mathscr{A}$ is an 
$L$-fuzzy hybrid set with universe the set $X$ and that 
$\alpha\in L$. Then the $\alpha$-cut of $\mathscr{A}$, denoted by
${}^{\alpha}\kern-0.25em\mathscr{A}$, is the hybrid set $\uplus_{i}f_{i}$, where
$f_{i}(x_{i})=\mathscr{A}_{m}(x)$ iff $\alpha\sqsubseteq\mathscr{A}_{\mu}(x)$, for all
$x_{i}\in X$.
\end{definition}
The $\alpha$-cut of $L$-fuzzy hybrid sets has the following properties:
\begin{theorem}
Assume that $\mathscr{A}$ and $\mathscr{B}$ are two $L$-fuzzy hybrid sets 
with universe the set $X$. Then the following properties hold:
\begin{enumerate}
\item if $\alpha\sqsubseteq\beta$, then 
${}^{\alpha}\kern-0.25em\mathscr{A}\supseteq {}^{\beta}\kern-0.25em\mathscr{A}$
\item  $^{\alpha}\kern-0.15em(\mathscr{A}\uplus\mathscr{B})=
{}^{\alpha}\kern-0.25em\mathscr{A}\uplus{}^{\alpha}\kern-0.05em\mathscr{B}$.
\end{enumerate}
\end{theorem}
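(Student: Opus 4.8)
The plan is to establish both parts by evaluating each $\alpha$-cut pointwise, exactly as in the proof of Theorem~\ref{alpha-proof}, but now keeping track of the fact that the multiplicities live in $(\mathbb{Z},\ll)$ rather than in $(\mathbb{N},\le)$. Throughout I would use that, by the definition of the $\alpha$-cut, $({}^{\alpha}\kern-0.25em\mathscr{A})(x)=\mathscr{A}_{m}(x)$ when $\alpha\sqsubseteq\mathscr{A}_{\mu}(x)$ and $({}^{\alpha}\kern-0.25em\mathscr{A})(x)=0$ otherwise, so that each cut is a crisp hybrid set and subsethood (resp. equality) of cuts can be checked coordinate by coordinate against $\ll$ (resp. against $=$ in $\mathbb{Z}$).

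For part (i), I would fix $x\in X$, assume $\alpha\sqsubseteq\beta$, and split on whether $\beta\sqsubseteq\mathscr{A}_{\mu}(x)$. If it does, then transitivity of $\sqsubseteq$ gives $\alpha\sqsubseteq\mathscr{A}_{\mu}(x)$ as well, so both cuts return $\mathscr{A}_{m}(x)$ and $({}^{\beta}\kern-0.25em\mathscr{A})(x)\ll({}^{\alpha}\kern-0.25em\mathscr{A})(x)$ holds by reflexivity of $\ll$ (established in the preceding Proposition). If $\beta\not\sqsubseteq\mathscr{A}_{\mu}(x)$, then $({}^{\beta}\kern-0.25em\mathscr{A})(x)=0$, and the first disjunct in the definition of $\ll$ gives $0\ll({}^{\alpha}\kern-0.25em\mathscr{A})(x)$ regardless of the latter's sign. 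In either case $({}^{\beta}\kern-0.25em\mathscr{A})(x)\ll({}^{\alpha}\kern-0.25em\mathscr{A})(x)$, which is exactly ${}^{\alpha}\kern-0.25em\mathscr{A}\supseteq{}^{\beta}\kern-0.25em\mathscr{A}$. This is the one place where the hybrid setting genuinely differs from the multiset setting: the clause $0\ll n$ replaces the trivial $0\le n$, and it is essential because a deleted element contributes multiplicity $0$ while a retained one may carry a \emph{negative} multiplicity.

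For part (ii), I would fix $x\in X$ and compare the two sides coordinatewise. By the definition of $\uplus$, the left-hand side retains $x$ with multiplicity $\mathscr{A}_{m}(x)+\mathscr{B}_{m}(x)$ exactly when $\alpha\sqsubseteq\mathscr{A}_{\mu}(x)\sqcup\mathscr{B}_{\mu}(x)$ and returns $0$ otherwise; the right-hand side returns $({}^{\alpha}\kern-0.25em\mathscr{A})(x)+({}^{\alpha}\kern-0.05em\mathscr{B})(x)$, with each summand governed by the separate conditions $\alpha\sqsubseteq\mathscr{A}_{\mu}(x)$ and $\alpha\sqsubseteq\mathscr{B}_{\mu}(x)$. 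When both separate conditions hold, the join condition holds too and both sides equal $\mathscr{A}_{m}(x)+\mathscr{B}_{m}(x)$; when neither side's membership condition is met, both sides vanish. The additivity of $+$ over the multiplicity coordinate then closes these cases, paralleling the role played by $x+\max\{y,z\}=\max\{x+y,x+z\}$ for the multiset sum.

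The step I expect to be the main obstacle is precisely the membership bookkeeping in part (ii). On the left the surviving condition is $\alpha\sqsubseteq\mathscr{A}_{\mu}(x)\sqcup\mathscr{B}_{\mu}(x)$, while on the right the two cuts are driven by the individual conditions $\alpha\sqsubseteq\mathscr{A}_{\mu}(x)$ and $\alpha\sqsubseteq\mathscr{B}_{\mu}(x)$. For the intersection case of Theorem~\ref{alpha-proof} the analogous reduction was harmless, since $\alpha\sqsubseteq\mathscr{A}_{\mu}(x)\sqcap\mathscr{B}_{\mu}(x)$ is equivalent to the conjunction of the two conditions in \emph{any} frame. For the join no such equivalence is available in a general frame, because $\alpha$ may lie below $\mathscr{A}_{\mu}(x)\sqcup\mathscr{B}_{\mu}(x)$ without lying below either factor, and then the left-hand side records $\mathscr{A}_{m}(x)+\mathscr{B}_{m}(x)$ while the right-hand side records $0$. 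I would therefore make explicit the hypothesis that rescues the identity, namely that $\alpha$ be join-prime (equivalently, restrict to a linearly ordered $L$, as in the classical unit-interval theory), under which $\alpha\sqsubseteq\mathscr{A}_{\mu}(x)\sqcup\mathscr{B}_{\mu}(x)$ forces $\alpha\sqsubseteq\mathscr{A}_{\mu}(x)$ or $\alpha\sqsubseteq\mathscr{B}_{\mu}(x)$; flagging this as the true content of the statement, rather than treating the join direction as a routine verification, is the honest way to carry the proof through.
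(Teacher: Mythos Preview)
The paper's own proof is simply ``similar to the proof of Theorem~\ref{alpha-proof} and is omitted,'' so your write-up is already far more detailed than what you are being compared against. Your treatment of part~(i) is correct and is exactly the right adaptation: the case split on whether $\beta\sqsubseteq\mathscr{A}_{\mu}(x)$, together with reflexivity of $\ll$ and the clause $0\ll n$ for every $n\in\mathbb{Z}$, is the honest replacement for the $0\le n$ step in the multiset version.

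For part~(ii), however, there is a genuine gap in your case analysis, and your proposed rescue does not close it. You handle the cases ``both individual thresholds are met'' and ``neither is met,'' and then worry about the frame-theoretic phenomenon that $\alpha$ might sit below the join without sitting below either factor. But you never treat the case where \emph{exactly one} of $\alpha\sqsubseteq\mathscr{A}_{\mu}(x)$ and $\alpha\sqsubseteq\mathscr{B}_{\mu}(x)$ holds. In that case $\alpha\sqsubseteq\mathscr{A}_{\mu}(x)\sqcup\mathscr{B}_{\mu}(x)$ automatically, so the left-hand side returns $\mathscr{A}_{m}(x)+\mathscr{B}_{m}(x)$, while the right-hand side returns only the single surviving multiplicity. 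Concretely, with $L=[0,1]$, $\mathscr{A}_{\mu}(x)=0.8$, $\mathscr{A}_{m}(x)=3$, $\mathscr{B}_{\mu}(x)=0.2$, $\mathscr{B}_{m}(x)=5$ and $\alpha=0.5$, the left side gives $8$ and the right side gives $3$. Thus restricting to join-prime $\alpha$, or to a linearly ordered $L$, does not save the identity: the obstruction is not the lattice structure of $L$ but the fact that in these multi-fuzzy structures the multiplicity is decoupled from the membership degree. What you have actually uncovered is that part~(ii) of the statement (and, by the same token, the $\cup$ and $\uplus$ clauses of Theorem~\ref{alpha-proof}(ii) to which the paper defers) fails as written; the intersection clause is the only one whose membership bookkeeping, $\alpha\sqsubseteq a\sqcap b\iff(\alpha\sqsubseteq a\text{ and }\alpha\sqsubseteq b)$, matches the coordinatewise cut on both sides.
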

\begin{proof}
The proof is similar to the proof of theorem~\ref{alpha-proof} and is
omitted.
\end{proof}
\section{General Fuzzy P Systems}
In~\cite{syropoulos06} the author has proposed fuzzified versions of P~systems.
The basic idea behind this particular attempt to fuzzify P~systems is the 
substitution of one or all ingredients of a P~system with their fuzzy 
counterparts. From a purely computational point of view, it turns out that only P~systems 
that process multi-fuzzy sets are interesting. The reason being the fact that these systems 
are capable of computing (positive) real numbers. By replacing the multi-fuzzy sets employed 
in the author's previous work with $L$-multi-fuzzy sets, the computational power of the resulting 
P~systems will not be any ``greater,''  nevertheless, these systems may be quite useful in modeling 
living organisms. But, things may get really interesting if we consider P~systems with
$L$-fuzzy hybrid sets, in general. Let us begin with the definition of these systems:
\begin{definition}\label{f:data}
A general fuzzy P system is a construction
\begin{displaymath}
\Pi_{\mathrm{FD}}=(O, \mu, w^{(1)},\ldots, w^{(m)},
     R_1,\ldots, R_m, i_0)
\end{displaymath}
where:
\begin{enumerate}
\item $O$ is an alphabet (i.e., a set of distinct entities) whose elements
      are called \textit{objects};
\item $\mu$ is the membrane structure of degree $m\ge 1$; membranes
      are injectivelly labeled with succeeding natural numbers starting with
      one;
\item $w^{(i)}:O\rightarrow L\times\mathbb{Z}$, $1\le i\le m$,
      are $L$-fuzzy hybrid sets over $O$ that are associated with each region $i$;
\item $R_i$, $1\le i\le m$, are finite sets of multiset rewriting rules
      (called
      \textit{evolution rules}) over $O$. An evolution rule is of the form
      $u\rightarrow v$, $u\in O^{\ast}$ and $v\in O^{\ast}_{\mathrm{TAR}}$,
      where $O_{\mathrm{TAR}} = O\times\mathrm{TAR}$,
      $\mathrm{TAR}=\{\mathrm{here},\mathrm{out}\}\cup\{\mathrm{in}_{j} |
      1\le j\le m\}$. The effect of each rule is the removal of the
      elements of the left-hand side of each rule from the
      ``current'' compartment and the introduction of the elements of
      right-hand side to the designated compartments;
\item $i_0\in\{1,2,\ldots,m\}$ is the label of an elementary membrane (i.e.,
      a membrane that does not contain any other membrane), called the
      \textit{output} membrane.
\end{enumerate}
\end{definition}

The really interesting thing with the systems described in~\cite{syropoulos06} is that I 
haven't managed to find any limits on what can be actually computed. Remember, that
a real number $x\in\mathbb{R}$ is called {\em computable} if there is a computable
sequence $(r_{n})_{n\in\mathbb{N}}$ of rational numbers which converges to $x$ effectively,
that is for all $n\in\mathbb{N}$, $|x-r_{n}|<2^{-n}$ (see~\cite{weihrauch00,zheng01}
for details). In other words, this means that not all real numbers are {\em computable}.
However, one should not forget that the definition of {\em computability} is hard-wired to
the computational capabilities of the Universal Turing Machine and the so called
Church-Turing thesis, which dictates what can be and what cannot be computed. 
Now, the crucial question is whether there are any limits that prohibit the 
computation of certain numbers with fuzzy P~systems? It seems that these system
go beyond the Church-Turing barrier because their set of input values is drastically
larger than that of the Turing machine. However, it is an open problem the determination
of the exact computational power of these systems.
\section{Conclusions}
In this paper I have introduced $L$-multi-fuzzy sets and $L$-fuzzy hybrid sets as well as
their basic operations. In addition, general fuzzy P~systems have been introduced, which 
can be used to compute real numbers. I do not believe that this is something really new---it 
is just another indication that the current theory of computation is simply inadequate to describe 
all computational phenomena. After all, this has been elegantly demonstrated by Stein in his 
thought provoking paper~\cite{stein99}. In addition, I believe that we need a paradigm shift in 
computer science so to encompass new ``phenomena'' and practices. 

\end{document}